\title{Regular networks are determined by their trees}
\author{Stephen J. Willson\\
		Department of Mathematics\\
		Iowa State University\\
		Ames, IA 50011 USA\\
		swillson@iastate.edu\\
}
\newtheorem{lem}{Lemma}[section]
\newtheorem{thm}[lem]{Theorem}
\newtheorem{cor}[lem]{Corollary}
\begin{document}

\maketitle

\vspace{10 pt}

{\bf Abstract.}  A rooted acyclic digraph $N$ with labelled leaves displays a tree $T$ when there exists a way to select a unique parent of each hybrid vertex resulting in the tree $T$.  Let $Tr(N)$ denote the set of all trees displayed by the network $N$.  In general, there may be many other networks $M$ such that $Tr(M) = Tr(N)$.   A network is regular if it is isomorphic with its cover digraph.  This paper shows that if $N$ is regular, there is a procedure to reconstruct $N$ given $Tr(N)$.  Hence if $N$ and $M$ are regular networks and $Tr(N) = Tr(M)$, it follows that $N = M$, proving that a regular network is uniquely determined by its displayed trees.

\section{ Introduction}  

It has become common, for a given collection $X$ of taxa and given a particular gene $g$, to use phylogenetic methods to determine a phylogenetic tree $T^g$.  The extant taxa correspond to leaves of the trees, while internal vertices correspond to ancestral species.  The arcs correspond to genetic change, typically by mutations in the DNA such as substitutions, insertions, and deletions.  Common methods for determining the trees include maximum likelihood, maximum parsimony, and neighbor-joining, but many other methods are also utilized.

Commonly the use of a different gene $h$ for the same collection $X$ of taxa results in a tree $T^{h}$ that differs from $T^g$.  Indeed, many different trees arise for different genes $g$ but the same $X$.  For example \cite{rok03} utilized 106 orthologs common to seven species of yeast and an outgroup.  The collection of 106 maximum-parsimony trees and 106 maximum-likelihood trees included more than 20 different robustly supported topologies.   While \cite{rok03} concatenated the data to try to achieve resolution, \cite{hol04} employed consensus networks to display the incompatibilities that existed among the trees.  

One hypothesis to explain the deviations of such gene trees from a single ``species tree" is to assume ``lineage sorting".  In this model a single species tree is seen as a kind of pipeline containing populations with significant genetic diversity; the genes actually fixate at locations that need not coincide with the speciation events in the species tree.  Hence the genes do not necessarily follow the species tree.  Coalescence methods such as \cite{ros02}, \cite{deg06}, \cite{ros07}  utilize this approach.  For example, \cite{deg06} shows that the most likely gene tree need not coincide with the species tree.  Much of the resulting diversity, however, makes use of short branch-lengths separating some speciation events in the species tree.  

Another hypothesis to explain the deviations of such gene trees from a single ``species tree" is to assume that evolution actually occurs on networks that are not necessarily trees.  Besides mutation events, these networks could include such additional reticulation events as hybridization or lateral gene transfer.   General frameworks are discussed in \cite{ban92}, \cite{bar04}, \cite{mor04}, and \cite{nak05}.

Even if the underlying species relationships are given by a network, the evolution of an individual gene might best be described by a tree.  The idea is that, at a hybridization event, some genes would be inherited from one parent species, and other genes from another parent species.  Suppose, for example, the underlying species network is $M$ in Figure 1.   Species 2 is hybrid with parental species $B$ and $C$.  If a particular gene in 2 is inherited from $B$, then the correct description of the inheritance of that gene would be tree $b$ in Figure 2.  If instead a gene in 2 is inherited from $C$, then the correct description for that gene would be tree $c$ in Figure 2.  Thus we would expect to see both trees $b$ and $c$ among the various gene trees.  Trees $b$ and $c$ are said to be \emph{displayed} by the network.  On the other hand, tree $d$ in Figure 2 is not displayed by $M$, so we would not expect a gene to evolve according to $d$ under these assumptions.

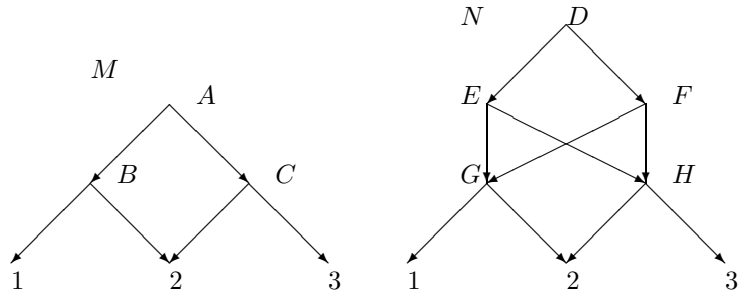
\begin{figure}[!htb]  
\begin{center}

\begin{picture}(290,130) 

\put(70,70) {\vector(-1,-1){30}}
\put(40,40) {\vector(-1,-1){30}}
\put(70,70) {\vector(1,-1){30}}
\put(100,40) {\vector(1,-1){30}}
\put(100,40) {\vector(-1,-1){30}}
\put(40,40) {\vector(1,-1){30}}
\put(10,0){1}
\put(70,0){2}
\put(130,0){3}
\put(50,40){$B$}
\put(110,40){$C$}
\put(80,70){$A$}
\put(40,80){$M$}

\put(220,100) {\vector(1,-1){30}}
\put(220,100) {\vector(-1,-1){30}}
\put(190,70) {\vector(0,-1){30}}
\put(190,70) {\vector(2,-1){60}}
\put(250,70) {\vector(-2,-1){60}}
\put(250,70) {\vector(0,-1){30}}
\put(190,40) {\vector(-1,-1){30}}
\put(190,40) {\vector(1,-1){30}}
\put(250,40) {\vector(-1,-1){30}}
\put(250,40) {\vector(1,-1){30}}
\put(160,0){1}
\put(220,0){2}
\put(280,0){3}
\put(180,40){$G$}
\put(260,40){$H$}
\put(180,70){$E$}
\put(260,70){$F$}
\put(220,100){$D$}
\put(180,100){$N$}

\end{picture}

\caption{  Two phylogenetic networks with base-set $X = \{1,2,3\}$.}
\end{center}
\end{figure}

\begin{figure}[!htb]  
\begin{center}

\begin{picture}(320,160) 

\put(70,150) {\vector(-1,-1){30}}
\put(70,150) {\vector(1,-1){30}}
\put(40,120) {\vector(-1,-1){30}}
\put(40,120) {\vector(1,-1){30}}
\put(100,120) {\vector(0,-1){30}}
\put(80,150){$A$}
\put(50,120){$B$}
\put(110,120){$C$}
\put(10,80){1}
\put(70,80){2}
\put(100,80){3}
\put(40,150){$a$}

\put(190,150) {\vector(-1,-1){30}}
\put(190,150) {\vector(1,-2){30}}
\put(160,120) {\vector(-1,-1){30}}
\put(160,120) {\vector(1,-1){30}}
\put(200,150){$A$}
\put(150,120){$B$}
\put(130,80){1}
\put(190,80){2}
\put(220,80){3}
\put(160,150){$b$}

\put(40,70) {\vector(-1,-2){30}}
\put(40,70) {\vector(1,-1){30}}
\put(70,40) {\vector(-1,-1){30}}
\put(70,40) {\vector(1,-1){30}}
\put(50,70){$A$}
\put(80,40){$C$}
\put(10,0){1}
\put(40,0){2}
\put(100,0){3}
\put(10,60){$c$}

\put(160,70) {\vector(-1,-2){30}}
\put(160,70) {\vector(1,-1){30}}
\put(190,40) {\vector(-1,-1){30}}
\put(190,40) {\vector(1,-1){30}}
\put(130,0){2}
\put(160,0){1}
\put(220,0){3}
\put(130,60){$d$}

\put(280,100) {\vector(-1,-1){30}}
\put(280,100) {\vector(1,-1){30}}
\put(250,70) {\vector(0,-1){30}}
\put(250,40) {\vector(0,-1){30}}
\put(250,70) {\vector(1,-1){30}}
\put(280,40) {\vector(0,-1){30}}
\put(280,40) {\vector(1,-1){30}}
\put(250,0){1}
\put(280,0){2}
\put(310,0){3}
\put(240,40){$G$}
\put(290,40){$H$}
\put(240,70){$E$}
\put(320,70){$F$}
\put(290,100){$D$}
\put(250,100){$e$}

\end{picture}

\caption{ Some trees related to Figure 1.   Both $M$ and $N$ display trees $b$ and $c$ but not $d$.  }
\end{center}
\end{figure}
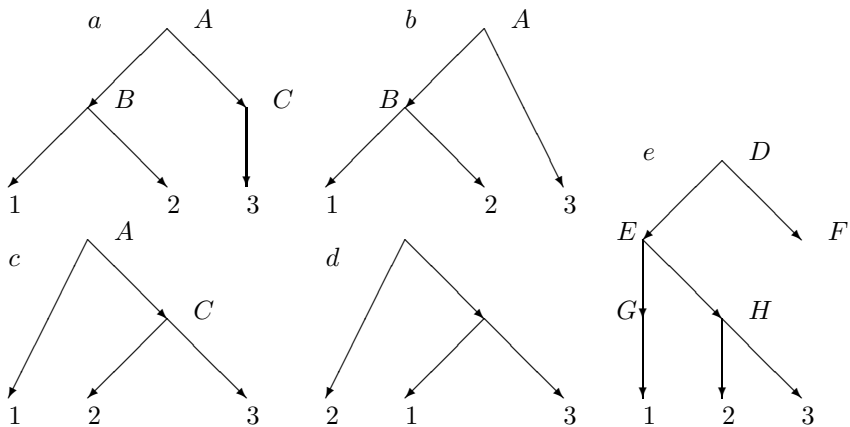

The assumption that the underlying description of evolutionary history is a network rather than a tree raises the fundamental problem of reconstructing a network from data.  Suppose that a collection of gene trees for the same set X of taxa is known.  Can the underlying network be uniquely reconstructed?

If $M$ is a network, let $Tr(M)$ denote the set of rooted trees displayed by $M$.  Figure 1 shows two distinct networks $M$ and $N$ such that $Tr(M) = Tr(N) = \{b,c\}$.  This example represents a common situation.  In general there may be many networks that display exactly the same trees.  

One approach has been to seek a network that displays a collection of trees and which has the fewest hybridization events.  This problem was proved to be NP-hard \cite{bor07}.  Various special cases with additional hypotheses on the networks have also been studied, such as  \cite{wan01}, \cite{gel04a}, \cite{hus05}, \cite{hus07}.   

A different approach has been to make assumptions on the properties of an allowable phylogenetic network.  It would be desirable to have a class of phylogenetic networks which is biologically plausible and such that there is often a uniquely determined network of this type with certain observable properties.  It is commonly assumed that the networks are rooted acyclic digraphs \cite{str00}, \cite{mor04}, \cite{nak05}.  Restrictions that appear tractable and yield interesting results include time consistency \cite{mor04}, \cite{car08}, roughly that the parents of a hybrid be contemporaneous.  Others include restrictions on the children of vertices, for example \emph{tree-child networks} \cite{crv07} or \emph{tree-sibling networks} \cite{car08}. Certain unique reconstructions for ``normal" networks are given in \cite{wil08}.  

Baroni and Steel \cite{bs06} defined the notion of a \emph{regular} network.  The precise definition is given in Section 2.   The basic idea is as follows:  The \emph{cluster} $cl(v)$ of a vertex $v$ is the set of leaves which are descendents of $v$.  In a regular network, no two distinct vertices have the same cluster.  Moreover,  $cl(u) \subset cl(v)$ iff there is a directed path from $v$ to $u$.  In Figure 1, $M$ is regular but $N$ is not since in $N$ $cl(D) = cl(E) = cl(F) = \{1,2,3\}$.  

The main result of this paper is Theorem 3.1.  This theorem gives a method which, given $Tr(M)$ for a regular network $M$ uniquely reconstructs the network $M$.  Corollary 3.2 asserts the consequence that if $M$ and $N$ are regular networks with the same leaves and $Tr(M) = Tr(N)$, then $M = N$.  Thus the entire collection of trees displayed by a regular network uniquely determines the network.  

Figure 1  shows that without the assumption of regularity, the network is not uniquely determined by the set of its displayed trees.  Another example is given in Section 5.

The proof of Theorem 3.1 is constructive.  A procedure MaximumProperChild is applied to the input $\mathcal{D} = Tr(N)$.  When $N$ is regular, the procedure outputs the network $N$ up to isomorphism.  (In fact it outputs the cover digraph of N, see \cite{bs06}.)  An example is worked in section 4, illustrating the procedure.  Also in section 4, we observe that the input need not be $\mathcal{D} = Tr(N)$ but instead might be an appropriate subset of $Tr(N)$.  Characterizing this subset, however, remains an open problem.  

It is not likely in a real biological problem that all the trees displayed by a network are known.  The number of such trees could easily grow exponentially with the number of leaves.  Hence the main theorem is primarily of theoretical interest:  For any two distinct regular phylogenetic networks there must exist a tree displayed by one but not the other.  

This situation contrasts with that in which the \emph{generalized clusters} or \emph{tree clusters} of a network are given instead of all the displayed trees.  For a network $N$, a generalized or tree cluster is any cluster of any tree $T$ displayed by $N$.  The set of all tree clusters of $N$ is denoted $TrCl(N)$.  The paper \cite{wil08} presents examples of distinct regular networks (indeed normal networks) $M$ and $N$  with the same leaf sets and which have precisely the same tree clusters; thus $TrCl(M) = TrCl(N)$ but $M$ and $N$ are not isomorphic.  The author therefore finds it somewhat surprising that, as shown in the current paper, the trees themselves do determine the network uniquely for a broad class of networks.

\section {Basics} 

A \emph{directed graph} or \emph{digraph} $N=(V,A)$ consists of a finite set $V = V(N)$ of \emph{vertices} and a finite set $A = A(N)$ of \emph{arcs}, each consisting of an ordered pair $(u,v)$ where $u \in V$, $v \in V$, $u\neq v$, interpreted as an arrow from $u$ (the \emph{parent}) to $v$ (the \emph{child}).  There are no multiple arcs and no loops.   A \emph{directed path} is a sequence $u_0, u_1, \cdots, u_k$ of vertices such that for $i = 1, \cdots, k$, $(u_{i-1}, u_i) \in A$.  The \emph{length} of the path is $k$ and the path is \emph{trivial} if $k = 0$.  The graph is \emph{acyclic} if there is no nontrivial directed path starting and ending at the same point.  Write $u \leq_N  v$  or more informally $u \leq v$  in $N$ if there is a directed path starting at $u$ and ending at $v$.  Write $u < v$ if $u \leq v$ and $u \neq v$.   If the graph is acyclic, it is easy to see that $\leq$  is a partial order on $V$.  

A vertex $r$ is a \emph{root} of the directed acyclic graph $(V,A)$  if, for all $v \in V$, $r\leq v$.  The network is \emph{rooted} if it has a root.  Clearly there can be at most one root.  

The \emph{indegree} of vertex $u$ is the number of $v \in V$ such that $(v,u) \in A$.  The \emph{outdegree} of $u$ is the number of $v \in V$ such that $(u,v) \in A$.  If $N$ is rooted at $r$ then $r$ is the only vertex of indegree 0.  A \emph{leaf} is a vertex of outdegree 0. A \emph{normal} (or \emph{tree}) vertex is a vertex of indegree at most 1.  A \emph{hybrid} vertex (or \emph{recombination vertex} or \emph{reticulation node}) is a vertex of indegree at least 2.  

Let $X$ be a set.  The cardinality of $X$ will be denoted $|X|$.  In biological terms we consider the members of $X$ to be a specific collection of biological species.  We call $X$ the \emph{base-set} of the directed graph $N = (V,A)$ if there is a given one-to-one relationship between $X$ and the subset $L(N) \subseteq V$ consisting of the leaves of $N$.  Thus we identify the leaves of $N$ with the members of $X$.  The interpretation of $X$ is that its members correspond to taxa on which direct measurements may be made, while $N$ describes a proposed evolutionary history giving rise to these taxa.  The leaves correspond to extant taxa so direct measurements are possible.  Typically one taxon is included which is an outgroup----an extant species clearly on a separate evolutionary track from all other taxa.  Hence the root is located as the attachment vertex of the outgroup taxon.  

In this paper a \emph{(phylogenetic) network} $N = (V,A,r,X)$ is an acyclic digraph $(V,A)$ with root $r$ and base-set $X$.  Two networks $N = (V,A,r,X)$ and $M = (V',A',r',X)$ are \emph{isomorphic}, $N \cong M$, iff there is a bijection $\phi: V \to V'$ such that for all $x \in X$, $\phi(x) = x$, and $(u,v) \in A$ iff $(\phi(u),\phi(v)) \in A'$.  

Let $N= (V,A,r,X)$ be a phylogenetic network.  Let $\mathcal{P}(X)$ denote the set of all subsets of $X$.  For $v \in V$, define the (full) \emph{cluster} of $v$ in $N$ by
$cl(v,N) = \{x \in X: v\leq x\}$.  
It is clear that for each $v \in V$, $cl(v,N) \in \mathcal{P}(X)$.  Define for each phylogenetic network $N$ with base-set $X$, 
	$cl_N : V \to \mathcal{P}(X)$
by $cl_N(v) = cl(v,N)$. 

The following properties of the clusters are basic:\\
(1) For $v \in V$, $cl_N(v)$ is nonempty.\\
(2) If $u \leq_N v$, then $cl_N(v) \subseteq cl_N(u)$.\\
(3) $cl_N(r) = X$.\\
(4) If $x \in X$, then $cl_N(x) = \{x\}$.  

Note that (1) follows since a maximal path must end at a leaf and every leaf lies in $X$.  Morever (2) follows since $\leq_N$ is a partial order.  In particular, if $(u,v)$ is an arc of $N$, then $cl_N(v) \subseteq cl_N(u)$.  Also, (3) follows since for each $x \in X$, we have $r \leq x$, and (4) follows since each $x \in X$ satisfies that $x$ is a leaf.  

The clusters $X$ and $\{x\}$ for $x \in X$ are called the \emph{trivial clusters} since they occur in each network.  Any other clusters will be called \emph{nontrivial}.  

Given the network $N=(V,A,r,X)$, we may let $\mathcal{C}(N) = \{cl_N(v): v \in V\} \subseteq \mathcal{P}(X)$.  The \emph{cover digraph} of $N$ is the digraph $(W,E)$ where \\
(1) $W = \mathcal{C}(N)$, and \\
(2) there is an arc $(B,C) \in E$  for $B$ and $C$ in $W$ iff\\
\indent(2a) $C \subset B$; and \\
\indent(2b) there is no $D \in W$ such that $C \subset D \subset B$.\\  
Note  the root is $X = cl_N(r)$ because for all $x \in X$, $r \leq x$.
Note since the members of $X$  are the leaves of $N$, it follows for each $x \in X$, $cl_N(x) = \{x\}$ so the leaves of the cover digraph are the singleton sets $\{x\}$ for $x \in X$.  Hence the leaves may be identified with the members of $X$ and the root $r$ with $X$.  

Baroni and Steel \cite{bs06} defined a \emph{regular} network to be a network which is isomorphic with its cover digraph.  Following is an equivalent description:  The phylogenetic network $N = (V,A,r,X)$ is \emph{regular} provided\\
(1) $cl_N: V \to \mathcal{P}(X)$  is one-to-one; and \\
(2) there is an arc $(u,v) \in A$ iff $cl_N(v) \subset cl_N(u)$ and there is no $w \in V$ such that
$cl_N(v) \subset cl_N(w) \subset cl_N(u)$.

Let $N = (V,A,r,X)$ be a phylogenetic network.  A \emph{parent map} for $N$ is a map
$p: V-\{r\} \to V$
such that for each $v \in V$, $v\neq r$, $p(v)$ is a parent of $v$, i.e., $(p(v),v) \in A$.
Since the root $r$ is unique, it is clear that if $v \in V$, $v\neq r$, then $v$ has a parent.  Note that if $v$ is normal and $v\neq r$, then $v$ has exactly one parent $q$, so all parent maps $p$ will satisfy $p(v) = q$.  When $v$ is hybrid, however, there are at least two parents of $v$.

Let $Par(N)$ denote the collection of parent maps for the network $N$.  Let $i(v,N)$ denote the indegree of $v$ in the network $N$.   Then the number of distinct parent maps is clearly
$|Par(N)| = \prod [i(v,N): v \in V, v\neq r]$.

For any parent map $p$ for $N =(V,A,r,X)$ construct a new network $N_p = (V,E,r,X)$ as follows:  The vertex set, root, and base-set are the same as for $N$.  The arc set $E$ consists of all arcs of the form $(p(v),v)$ where $v \in V$, $v\neq r$.  Thus $E \subseteq A$.

Each vertex $v$ other than $r$ has exactly one parent in $N_p$; i.e., $i(v,N_p) = 1$.  Hence $N_p$ is a rooted tree.  It is quite possible that $v$ has outdegree 1 as well as indegree 1, but such vertices are often suppressed in a rooted tree.  We will therefore consider two kinds of simplification to change  $N_p$ into a rooted tree in standard form.  

Type 1:  Suppress a vertex with outdegree 1.  More specifically, if $u$ has outdegree 1, say via arc $(u,v)$, then remove $u$;  remove also each arc $(w,u)$ and replace it by arc $(w,v)$.

Type 2:  Suppress a vertex with no directed path to a member of $X$.  More specifically, suppose $u$ is such a vertex.  Then, delete $u$, for each arc $(v,u)$ delete $(v,u)$, and for each arc $(u,v)$, delete $(u,v)$.  

The result of performing all possible simplifications of Type 1 or Type 2 on $N_p$ is denoted $T(N_p)$, called the \emph{standard form} of $N_p$.  

Figure 2 exhibits some rooted trees related to Figure 1.  In $M$ let the parent map $p$ satisfy $p(2) = B$ (and trivially $p(1) = B$, $p(3) = C$, $p(B) = A$, $p(C) = A$).  Then the tree $M_p$ is given in $a$.  Since $C$ has outdegree 1 in Fig 2a, it is suppressed by Type 1, resulting in the standard form $b$, which is $T(M_p)$.   Similarly if $p'$ is the parent map with $p'(2) = C$, then Fig 2c shows $T(M_{p'})$.  The tree $d$ is not displayed by $M$ since there is no parent map yielding $d$.  

It is easy to see that $N$ in Fig 1 also displays $b$ and $c$.  Consider the parent map $q$ for $N$ given by $q(2) = H$, $q(G) = E$, $q(H) = E$.  Then $N_q = e$ in Figure 2.   We simplify $e$ by suppressing $F$ by Type 2 and then $D$ and $G$ by Type 1.  Hence $T(N_q) = c$ in Figure 2.  For both the networks in figure 1, we have $Tr(M) = Tr(N) = \{b, c\}$ using the notation in Figure 2.

\section{ Reconstruction of regular networks from all their trees} 

Suppose $\mathcal{D}$ is a nonempty collection of rooted trees each with the same base-set X.  In this section we present a procedure called MaximumProperChild (MPC) which constructs a phylogenetic network MPC($\mathcal{D}$) given $\mathcal{D}$.  The algorithm always terminates with a network.

The main theorem 3.1 asserts that if $\mathcal{D} = Tr(N)$ for some regular network $N$, then the output of the procedure is $N$, so $N$ has been reconstructed.  

\begin{thm} 
Suppose $N = (V,A,r,X)$ is a regular phylogenetic network.  Then the output of MaximumProperChild applied to $\mathcal{D} = Tr(N)$ is isomorphic with $N$; ie., $MPC(Tr(N)) \cong N$.
\end{thm}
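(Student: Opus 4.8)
Since $N$ is regular it is isomorphic to its cover digraph, and the cover digraph is determined entirely by the family of clusters $\mathcal{C}(N) = \{cl_N(v) : v \in V\} \subseteq \mathcal{P}(X)$: its vertex set is $\mathcal{C}(N)$ and it has an arc $(B,C)$ exactly when $C \subset B$ with no member of $\mathcal{C}(N)$ strictly between. So the whole task reduces to showing that $\mathrm{MPC}(Tr(N))$ has vertex set $\mathcal{C}(N)$ and draws its arcs by the same ``covering'' rule; the latter is read off from the construction of MPC, so the real content is the set equality. I would prove $\{\text{sets produced by }\mathrm{MPC}\} = \mathcal{C}(N)$ by induction along the partial order of clusters of $N$ ordered by inclusion, starting from the largest cluster $X = cl_N(r)$, which is the root that MPC starts from; the inductive step treats one already-discovered cluster $B = cl_N(v)$ and its ``maximum proper children''.

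For the completeness half (every cluster of $N$ is produced), the inductive step is: assuming $B = cl_N(v) \in \mathcal{C}(N)$ has been produced, and given a child $c$ of $v$ in $N$ with $C = cl_N(c)$, I must exhibit a tree in $Tr(N)$ that forces MPC to attach $C$ below $B$. The tool is the following witness construction: among the descendants of $c$ (the vertices $w$ with $c \leq_N w$, together with $c$) one can pick a spanning tree rooted at $c$ — this is possible because any $w$ with $c <_N w$ has a parent that is still $\geq_N c$, namely the vertex preceding $w$ on a directed path from $c$ — and extend the corresponding parent map $p$ to all of $V$, also choosing $p(c)=v$. Then $cl_{N_p}(c) = cl_N(c) = C$, so $C$ survives (after at most Type-1 suppressions) as a cluster of $T(N_p) \in Tr(N)$, and regularity of $N$ guarantees there is no cluster strictly between $C$ and $B$, so $C$ really does sit immediately below $B$. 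This is exactly the configuration MPC looks for.

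For the soundness half (every set produced by MPC is a cluster of $N$), the key structural fact is that every $T \in Tr(N)$ is $T(N_p)$ for some parent map $p$, and $N_p \subseteq N$, so $cl_{N_p}(w) \subseteq cl_N(w)$ for every vertex $w$: every cluster of every displayed tree is a \emph{truncation} of a genuine cluster of $N$, obtained when some descendants of $w$ are hybrids routed away from $w$. In the inductive step I would assume $B = cl_N(v)$ has been produced and show that each ``maximum proper child'' that MPC extracts from the trees in $\mathcal{D}$, being by definition the largest subset of $B$ that occurs directly below $B$ in the appropriate tree configurations, cannot be a proper truncation of $cl_N(c_i)$ for a child $c_i$ of $v$ (the untruncated version $cl_N(c_i)$ is also available, by the completeness construction, and is larger), and cannot be a truncation of the cluster of a grandchild or lower vertex (that cluster is already contained in some $cl_N(c_i) \subsetneq B$, hence strictly smaller than that $cl_N(c_i)$). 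Hence the maximum lands on some $cl_N(c_i)$.

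The main obstacle is precisely this soundness half, and within it the behaviour at hybrid vertices: the truncated tree clusters are exactly the ``spurious'' sets responsible for the examples (cited in the Introduction) of distinct regular networks with $TrCl(M) = TrCl(N)$, so the ``Maximum'' in MaximumProperChild must be doing genuine work to avoid them. I expect the heart of the argument to be a lemma stating that, for a genuine cluster $B = cl_N(v)$ and a child $c$ of $v$, the family of subsets of $B$ that appear ``directly below $B$'' in displayed trees in the way MPC inspects has $cl_N(c)$ as a maximal element and contains no set incomparable to all the $cl_N(c_i)$; proving this requires a careful analysis of how parent maps can relocate hybrid descendants, and is where regularity of $N$ — injectivity of $cl_N$ together with the exact Hasse-diagram description of its arcs — is used essentially. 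Once this lemma is in place, combining it with the completeness construction closes both inductions simultaneously and yields $\mathrm{MPC}(Tr(N)) = $ cover digraph of $N \cong N$.
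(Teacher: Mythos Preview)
Your overall architecture matches the paper's (induction down from $X$, with a completeness direction and a soundness direction corresponding to the paper's Claims~1a and~1b), but there is a genuine gap: you never engage with the ``proper'' in \emph{proper tree}. When MPC processes a vertex $C$, it does not look at all trees in which $C$ is a cluster; it looks only at trees $T$ whose root-to-$C$ path lies entirely inside the partial reconstruction $M_k$. The paper remarks at the end of Section~4 that dropping this restriction breaks the algorithm already on the worked example. Your witness tree for completeness fixes the parent map only on descendants of $c$ and sets $p(c)=v$; with an arbitrary extension above, $B=cl_N(v)$ need not even appear as a cluster of $T(N_p)$ (if all other children of $v$ are hybrids routed elsewhere, $v$ has outdegree~1 in $N_p$ and is suppressed), and even when it does, the root-to-$B$ path in the tree may pass through truncated clusters that are not yet in $M_k$, so the tree is not proper for $B$ and MPC ignores it. The paper's Lemma~3.3 is precisely the missing construction: it builds $p$ level-by-level along a fixed $N$-path $r=P_n,\ldots,P_0$, routing each hybrid $H$ through the deepest $P_i$ still above $H$, so that $cl(P_i,N_p)=cl(P_i,N)$ for every $i$ simultaneously.

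More seriously, your soundness argument silently assumes that in any tree witnessing a proper child $D$ of $B$, the tree-vertex carrying cluster $B$ is the vertex $v$ of $N$ with $cl_N(v)=B$. This need not hold a priori: some $u$ with $cl_N(u)\supsetneq B$ can have its $N_{\hat p}$-cluster truncated down to $B$, and then the ``children of $B$'' in that tree are children of $u$ in $N$, not of $v$, so your dichotomy ``truncation of a child of $v$ / truncation of something below a child of $v$'' does not exhaust the cases. The paper's real work in Claim~1a is exactly this identification: using that $\hat T$ is a \emph{proper} tree, one has a path $\hat P_m,\ldots,\hat P_0=\hat C$ in $\hat T$ whose clusters are already known to be $N$-clusters $cl(Q_i,N)$ with $(Q_i,Q_{i-1})$ an arc of $N$; then from $cl(Q_{i-1},N)\subseteq cl(\hat P_{i-1},N)\subseteq cl(Q_i,N)$ and non-redundancy of the arc $(Q_i,Q_{i-1})$ (part~(2) of regularity) one forces $\hat P_{i-1}=Q_{i-1}$ step by step, ending with $\hat C=v$. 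Only after this identification does your truncation argument go through.
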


An immediate consequence of Theorem 3.1 is Cor 3.2, which asserts that the set of trees displayed by a regular network uniquely determines the network.

\begin{cor} 
Suppose $M$ and $N$ are regular phylogenetic networks with base-set $X$.  If $Tr(M) = Tr(N)$, then $M \cong N$.
\end{cor}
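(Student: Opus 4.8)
The plan is to obtain Corollary 3.2 as a direct consequence of Theorem 3.1, using only the additional observation that the procedure MaximumProperChild is a \emph{deterministic} function of its input collection of trees. First I would check that the corollary's hypotheses make the procedure applicable: since $M$ and $N$ are phylogenetic networks with the common base-set $X$, both $Tr(M)$ and $Tr(N)$ are nonempty collections of rooted trees on $X$, so each is a legitimate input to MPC. The assumption $Tr(M) = Tr(N)$ says these collections are \emph{equal} as sets, not merely isomorphic tree by tree; hence feeding either one into the procedure returns literally the same network, i.e. $MPC(Tr(M)) = MPC(Tr(N))$.

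The second step is to invoke Theorem 3.1 twice. Because $N$ is regular, Theorem 3.1 gives $MPC(Tr(N)) \cong N$; because $M$ is regular, it likewise gives $MPC(Tr(M)) \cong M$. Chaining these with the equality from the first step yields $M \cong MPC(Tr(M)) = MPC(Tr(N)) \cong N$. Isomorphism of phylogenetic networks, as defined in Section 2, is an equivalence relation: it is symmetric, and composing two such bijections again fixes every leaf $x \in X$ pointwise and preserves the arc relation, so it is transitive. Transitivity then delivers $M \cong N$, which is the assertion of the corollary.

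I do not anticipate any real obstacle here, since all the mathematical content resides in Theorem 3.1. The only two points worth stating explicitly are that MPC is genuinely \emph{well-defined} on collections of trees — so that identical inputs are forced to produce identical (not merely isomorphic) outputs — and that the isomorphisms furnished by Theorem 3.1 respect the identification of the leaves with the members of $X$, so that composing them preserves the base-set structure. Both follow immediately from the definitions of the procedure and of network isomorphism given in Sections 2 and 3.
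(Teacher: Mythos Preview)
Your proposal is correct and matches the paper's own treatment: the paper states Corollary 3.2 as an immediate consequence of Theorem 3.1 without a separate proof, and your argument simply spells out that immediacy by running MPC once on the common input $Tr(M)=Tr(N)$ and invoking Theorem 3.1 for each of $M$ and $N$. The only cosmetic point is that the procedure involves a choice of which unchecked vertex $C$ to process next, so MPC is not literally a single-valued function; but this is harmless, since Theorem 3.1 guarantees that \emph{any} execution on $Tr(N)$ yields a network isomorphic to $N$, and a single execution on the common input therefore witnesses both isomorphisms.
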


Theorem 3.1 need not be true without the assumption  that $\mathcal{D}$ includes all the trees displayed by $N$.  It is easy to find examples in which $N$ is not reconstructed if some trees are missing from $\mathcal{D}$.  On the other hand, it is also easy to find examples in which $\mathcal{D} \neq Tr(N)$ but still $MPC(\mathcal{D}) = N$.  What is important is that the ``right" trees lie in $\mathcal{D}$.  Roughly speaking, the ``right" trees are those that arise via the use of Lemma 3.3.  Further discussion of this point is in Section 5.  

The number of displayed trees may be exponentially large in $|X|$, so the algorithm need not be polynomial-time in $|X|$.  It is easy to see, however, that the procedure is polynomial-time in $|X| + |\mathcal{D}|$.  

Figure 1 shows that Cor 3.2 fails without the assumption of regularity.  

Here is an overview of the procedure:  We reconstruct the network $N$ recursively by finding the clusters of $N$.  Initially, we have only the root cluster, which is $X$.  At any given stage, given a cluster $C=cl(u,N)$ for some vertex $u\in V$ we are able to identify the clusters of all its children in $N$.  To do so, by construction we will know already the clusters along a directed path $X=P_n, P_{n-1}, \cdots, P_0 = C$  from $X$ to $C$.  The ``proper trees" for $C$, denoted $ProperTr(C)$,  will consist of any input trees $T$ exhibiting all the clusters along such a directed path already in our reconstruction.  We list the clusters for the children of $C$ in all the proper trees for $C$.  Among these we consider the set of ``maximal proper children,'' denoted $MaxProperCh(C)$, consisting of the clusters $U$ for children of $C$ in a proper tree, such that there is no other cluster $W$ which is a child of $C$ in some proper tree and for which $U \subset W \subset C$.  We show that these maximal children are necessarily the clusters of children of $C$ in $N$, and all clusters of the children of $C$ in $N$ arise in this manner.  Hence the children of $C$ are precisely the members of $MaxProperCh(C)$.  We insert the members of $MaxProperCh(C)$ into the set of vertices of our reconstruction, together with arcs from $C$ to each such vertex; then we continue recursively.  

An example of the procedure will be given in Section 4. 

The following is a precise more formal description:
\bigskip
\hrule height1pt
\medskip
\noindent\textbf{Algorithm} MaximumProperChild.  \\
\textbf{Input}:  $\mathcal{D}$ is a nonempty collection of rooted trees each with the base-set $X$.\\
\textbf{Output}:  a regular phylogenetic network $M$ with base-set $X$.\\
\textbf{Procedure}.We construct a sequence $M_0, M_1, \cdots$ of directed graphs where $M_k = (V_k, A_k)$.  Each member of $V_k$ is a nonempty subset of $X$, and $V_0 \subseteq V_1 \subseteq V_2 \subseteq \cdots$.

1.  Initially $M_0 = (V_0, A_0)$ with $V_0 = \{X\}$ and $A_0 = \emptyset$.  Thus $M_0$ has a single vertex which is the cluster $X$.   This vertex is not checked off.  

Recursively perform the following step 2:

2.  Suppose $M_k = (V_k, A_k)$ is known and some vertex $U \in V_k$ is not checked off.

2a.  If $V_k$ contains a singleton set $C = \{a\}$ which is not checked off, then $M_{k+1} = M_k$ except that $\{a\}$ has been checked off.

2b.  If $V_k$ contains a doubleton set $C = \{a,b\}$ which is not checked off, then $V_{k+1} := V_k \cup \{\{a\}, \{b\}\}$, and $A_{k+1} = A_k \cup \{(\{a,b\},\{a\}), (\{a,b\}, \{b\})\}$.  In $M_{k+1}$ check off all members of $V_k$ that were already checked off and in addition check off $\{a,b\}$, $\{a\}$, and $\{b\}$ but nothing else.   This thus adjoins the two singletons $\{a\}$ and $\{b\}$.  

2c.  Suppose neither 2a nor 2b applies.   Suppose $C \in V_k$ has not been checked off.  
Let $ProperTr(C) = \{T \in \mathcal{D}: C $ is a cluster of $T$ and there is a directed path $X=P_n, P_{n-1}, \cdots, P_1, P_0 = C$ in $T$  such that for each $i$, $cl(P_i,T)$ is a vertex of $M_k $ and each arc $(P_i, P_{i-1})$ is an arc of $M_k\}$ be the set of proper trees for $C$. 
Let $ProperCh(C) = \{D: $ for some $T \in ProperTr(C)$, $D$ is a child of $C\}$ be the set of children of $C$ in any proper tree for $C$.
Let $MaxProperCh(C) = \{D \in ProperCh(C): $ there is no $D' $ in $ProperCh(C)$ such that 
$D \subset D' \subset C \}$ (strict inclusions) be the set of maximal proper children of $C$.  
For each $D \in MaxProperCh(C)$, adjoin to $M$ the vertex $D$ (if it is not already present) and the arc $(C,D)$.  More explicitly define $V_{k+1} = V_k \cup \{D: D \in MaxProperCh(C)\}$.  Define $A_{k+1} = A_k \cup \{(D,C): D \in MaxProperCh(C)\}$.  In $M_{k+1}$ check off all vertices checked off in $M_k$ and also check off $C$ but nothing else.  Note that it is possible that $D$ is already present in $V_k$, but that this construction may still introduce a new arc incoming to $D$.  

3.  The procedure terminates with $M_n$ such that every member of $V_n$ has been checked off.  Return $M_n$.  
\medskip
\hrule height1pt
\bigskip

It is clear that the procedure always terminates, whether or not $\mathcal{D} = Tr(N)$.  This is because $X$ is a finite set, so  $\mathcal{P}(X)$ is finite and there can only be finitely many vertices.  At the end of 2a, 2b, or 2c an additional vertex is checked off.  Hence after finitely many steps all vertices must be checked off.  

Moreover, whenever a new vertex $D$ is added in step 2b or 2c, $ProperTr(D)$ is nonempty.  This is trivially true if $D$ arose as a singleton set in 2b.  If D arose in 2c, then there exists a parent $C$ of $D$ and $T \in ProperTr(C)$.  Hence $T$ also lies in $ProperTr(D)$.  Thus when  2c is applied to $C$ containing at least three members of $X$, it identifies a child $D$ of $C$ which is a nonempty proper subset of $C$.  It follows that when the procedure terminates, each singleton set $\{x\}$ is in $V_n$.   

An example is given in the next section.  

We now turn to the proof of Theorem 3.1.  The first step is a lemma which identifies a useful tree related to  a given directed path in $N$.

\begin{lem}
Let $C$ be a vertex of $N$.  Let $P_n = r, P_{n-1}, \cdots, P_1, P_0=C$ be a directed path in $N$ from the root $r$ to $C$.  There exists a tree $T$ displayed by $N$ in standard form such that, for $i = 0, \cdots, n$, $P_i$ is a vertex of $T$ and we have $cl(P_i,T) = cl(P_i,N)$.  
\end{lem}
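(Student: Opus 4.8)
The plan is to exhibit one explicit parent map $p$ for $N$, put $T:=T(N_p)$, and verify that this $T$ does the job. Since $T(N_p)$ is by definition a tree displayed by $N$ which is already in standard form, it suffices to establish (a) no $P_i$ is deleted by the Type~1 or Type~2 simplifications turning $N_p$ into $T$ (so each $P_i\in V(T)$), and (b) $cl(P_i,T)=cl(P_i,N)$ for all $i$. The whole difficulty lies in choosing $p$ so that (b) holds: an arbitrary parent map can strip leaves out of $cl(P_i,N)$. Regularity will then make (a) almost automatic.

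To define $p$, for each vertex $v$ set $i_0(v):=\min\{i:P_i\le_N v\}$; this makes sense because $r=P_n\le_N v$, and since $P_n\le_N\cdots\le_N P_0$ we get $\{i:P_i\le_N v\}=\{i_0(v),\ldots,n\}$, with $i_0(P_j)=j$ and $x\in cl(P_i,N)\iff i_0(x)\le i$. Now put $p(P_j):=P_{j+1}$ for $0\le j\le n-1$ (legitimate since $(P_{j+1},P_j)\in A$), and for every other non-root vertex $v$ let $p(v)$ be a parent of $v$ with $i_0(p(v))$ minimal among all parents of $v$. The invariant this buys is: $i_0(p(P_j))=j+1$ for $j\le n-1$, and $i_0(p(v))=i_0(v)$ for every non-root $v$ off the path. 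Indeed, such a $v$ satisfies $v\neq P_{i_0(v)}$, so a directed $N$-path from $P_{i_0(v)}$ down to $v$ has length $\ge 1$ and hence exhibits a parent $u$ of $v$ with $P_{i_0(v)}\le_N u$; then $i_0(u)\le i_0(v)$, so $i_0(p(v))\le i_0(v)$ by minimality, while $p(v)\le_N v$ forces $i_0(p(v))\ge i_0(v)$.

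Cluster preservation (b) now follows. Fix a leaf $x$ and iterate the parent map: $x=x_0$, $x_{t+1}=p(x_t)$. As $N_p$ is a rooted tree this chain reaches $r$. By the invariant the sequence $i_0(x_0),i_0(x_1),\ldots$ is nondecreasing, is unchanged at a step whenever $x_t$ is off the path, and rises by exactly $1$ whenever $x_t$ is on the path; hence it attains every value from $i_0(x)$ up to $n$, and it leaves the value $i^\ast$ precisely at the step where $x_t=P_{i^\ast}$. Consequently $P_{i_0(x)},P_{i_0(x)+1},\ldots,P_n$ all occur among the $x_t$, i.e.\ each is an $N_p$-ancestor of $x$. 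So $x\in cl(P_i,N_p)$ whenever $i_0(x)\le i$, that is, whenever $x\in cl(P_i,N)$; and the reverse inclusion $cl(P_i,N_p)\subseteq cl(P_i,N)$ is immediate because every arc of $N_p$ is an arc of $N$. Thus $cl(P_i,N_p)=cl(P_i,N)$ for all $i$.

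Finally (a). Note first that each Type~1 or Type~2 step leaves unchanged the cluster of every vertex it does not delete, and that any arc present at any stage of the simplification arises from a directed path of $N_p$ (hence of $N$) joining its endpoints. Since $cl(P_i,N_p)=cl(P_i,N)\neq\emptyset$, the vertex $P_i$ always has a directed path to $X$, so no Type~2 step deletes it. And if a Type~1 step deleted $P_i$, then at that moment $P_i$ had outdegree $1$ with child $w$; as $P_i\notin X$, the vertices $P_i$ and $w$ have equal clusters in the current graph, and by preservation these equal $cl(P_i,N_p)=cl(P_i,N)$ and $cl(w,N_p)$ respectively, so $cl(P_i,N)=cl(w,N_p)\subseteq cl(w,N)$; but $P_i\le_N w$ gives $cl(w,N)\subseteq cl(P_i,N)$, so $cl(w,N)=cl(P_i,N)$, whence $w=P_i$ by regularity (injectivity of $cl_N$), contradicting $w\neq P_i$. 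Hence every $P_i$ survives, so $P_i\in V(T)$ and $cl(P_i,T)=cl(P_i,N_p)=cl(P_i,N)$; moreover the arcs $(P_j,P_{j-1})$ of $N_p$ survive as well, since none of the $P_i$ is suppressed, so $P_n\to\cdots\to P_0$ is in fact a directed path of $T$. The one genuinely delicate point is the design of $p$ together with the invariant above; after that the argument is bookkeeping plus a single clean appeal to regularity.
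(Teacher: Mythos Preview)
Your proof is correct and follows essentially the same strategy as the paper's: build a parent map that sends every vertex to a parent still dominated by the lowest $P_i$ above it, verify $cl(P_i,N_p)=cl(P_i,N)$ for all $i$, and then use regularity to show that no $P_i$ is suppressed in passing to the standard form. Your function $i_0(v)=\min\{i:P_i\le_N v\}$ is a clean repackaging of the paper's case-by-case rules (0)--($k$), your explicit stipulation $p(P_j)=P_{j+1}$ is a mild sharpening the paper leaves implicit, and your survival argument through the intermediate simplification stages is, if anything, a bit more careful than the paper's direct outdegree argument in $N_p$.
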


\begin{proof}
We find a tree $T$ as follows:  The parent map $p$ which yields $T$ is selected by\\
(0) If $W$ is normal, $W\neq r$, then $p(W)$ is the unique parent of $W$. \\
(1) If $H$ is hybrid and $C < H$, choose a parent $p(H)$ of $H$ such that $C \leq p(H)$ in $N$. \\
(2) Suppose $n \geq 1$.  If $H$ is hybrid and $P_1 < H$, but it is false that $C < H$, choose $p(H)$ such that $P_1 \leq p(H)$ in $N$. \\
(3) Suppose $n \geq 2$.  If $H$ is hybrid and $P_2 < H$ but it is false that $P_1 < H$ (hence also false that $C < H$), then select $p(H)$ such that $P_2 \leq p(H)$ in $N$. \\
(k) In general, if $n \geq k$, $H$ is hybrid, and $P_k < H$ but it is false that $P_{k-1} < H$, select $p(H)$ such that $P_k \leq p(H)$ in $N$.\\
Since $P_n = r$, it follows that for each hybrid $H$, $p(H)$ will be defined.

I claim that $cl(C,N_p) = cl(C,N)$.  Clearly $cl(C,N_p) \subseteq cl(C,N)$.  Conversely, suppose $W$ is a vertex of $N$ and $C \leq W$ in $N$.  I will show that $C \leq W$ in $N_p$.  It suffices to show that whenever $C < W$ in $N$, then there exists a parent $P$ of $W$ in $N_p$ such that $C \leq P$ in $N$.  The result is immediate if $W$ has a unique parent $P$ in $N$ because since $C < W$ it follows  $C \leq P$.  If, instead, $W$ is hybrid, then by assumption $p(W)$ is a parent  of $W$  in $N_p$ and by (1) $C \leq p(W)$ in $N$.  This proves that $C \leq W$ in $N_p$ if $C \leq W$ in $N$.  Now, if $x \in cl(C,N)$ the choice $W=x$ shows, since $C\leq x$ in $N$, that $C \leq x$ in $N_p$, whence $x \in cl(C,N_p)$.  Thus $cl(C,N_p) = cl(C,N)$.  

Suppose $n \geq 1$.  I now claim that $cl(P_1,N_p) = cl(P_1,N)$.  It is immediate that $cl(P_1,N_p) \subseteq cl(P_1,N)$.  For the converse, suppose $x \in cl(P_1,N)$.   Suppose $W$ is a vertex of $N$ and $P_1 \leq  W$ in $N$.  I show that $P_1 \leq  W$ in $N_p$.  It suffices to show that if $P_1 < W$ in $N$, then there exists a parent $P$ of $P_1$ in $N_p$ such that $P_1 \leq  P$ in $N$.  If $C < W$ in $N$, then from above there exists a parent $P$ of $W$ in $N_p$ such that $C \leq  P$ in $N$, whence $P_1 \leq  C\leq P$ in $N$.  Hence we may assume that $C \nless W$ in $N$.   If $W$ is normal, then its unique parent $P$ must satisfy that $P \leq  W$ in $N_p$ (since arcs to normal vertices remain in $N_p$) whence $P_1 \leq  P$ in $N$.  If instead $W$ is hybrid, then since $P_1 < W$ but $C \nless W$ it follows from (2) that $p(W)$ satisfies $P_1 \leq  p(W)$ in $N$.  This proves that $P_1 \leq  W$ in $N_p$ if $P_1 \leq  W$ in $N$.  Now, if $x \in cl(P_1,N)$ the choice $W=x$ shows, since  $P_1\leq x$ in $N$, that $P_1 \leq  x$ in $N_p$, whence $x \in cl(P_1,N_p)$.  Thus $cl(P_1,N_p) = cl(P_1,N)$.  

The argument can be iterated to show that for $i = 0, \cdots, n$, $cl(P_i,N_p) = cl(P_i, N)$.  

Let $T = T(N_p)$  be the standard form of  $N_p$ obtained by suppressing vertices of outdegree 1 and vertices with no directed paths to any member of $X$.  By regularity of $N$, the sets $cl(P_i,N)$ are distinct for $i = 0, \cdots, n$.  Hence the sets $cl(P_i, N_p)$ are distinct for $i = 0, \cdots, n$.  I claim that $P_n, \cdots, P_0$ are vertices of $T$.  

Note first that there exists a directed path in $N$ of maximal length (number of arcs) starting at $P_0 = C$.  The path must end at some leaf which consists of a member $x \in X$ since $X$ contains all the leaves.  From (0) and (1) it follows that there is a path in $N_p$ from $C$ to $x$ as well; otherwise some vertex $W$ on that path would satisfy that  $C < W$ in $N$ so some parent $P$ of $W$ satisfies $C \leq  P$, but $p(W)$ satisfies that $C \nleq p(W)$, contradicting (0) or (1).  Hence there is a directed path in $N_p$ from $C$ to $x$, whence also a directed path from each $P_i$ to $x$.  It follows that no $P_i$ is suppressed because there is no path to a member of $X$.  

Moreover,  for $i = 1, \cdots, n$,  $P_i$ is a vertex of $T$; otherwise $P_i$ would have outdegree 1 in $N_p$ whence $cl(P_i,N_p) = cl(P_{i-1},N_p)$, whence $cl(P_i,N) = cl(P_{i-1},N)$.  Moreover, I claim that $C = P_0$ is a vertex of $T$.  The claim is immediate if $C$ is a leaf.  If $C$ is not a leaf then $C$ has  children  $D_1, D_2, \cdots, D_k$ in $N$, with $k\geq 2$.  By regularity $cl(D_j,N)$ is a proper subset of $cl(C,N)$.  If $C$ were not a vertex of $T$, it would have outdegree 1 in $N_p$.  Assume its child in $N_p$ is $D_1$.  Then $cl(C,N) = cl(C,N_p) = cl(D_1,N_p) \subseteq cl(D_1,N) \subset cl(C,N)$, a contradiction.  

It follows that in $T$ there is a directed path $P_n=X, P_{n-1}, \cdots, P_1, P_0 = C$  such that for $i =0, \cdots, n$,  $cl(P_i,T) = cl(P_i,N)$.  

\end{proof}


We now prove theorem 3.1.  

\begin{proof}
Let $N$ be a regular network and $\mathcal{D} = Tr(N)$.  Let the sequence of networks obtained from MPC be denoted $M_0, M_1, \cdots, M_n$ where $M_i = (V_i,A_i)$ has the set $V_i$ of vertices and the set $A_i$ of arcs.  Initially $V_0 = \{X\}$ and $A_0= \emptyset$.  

The proof will be by induction.  The $i$-th inductive hypothesis $H_i$ is that\\
(1) For each vertex $U$ of $M_i$ there exists a vertex $U'$ of $N$ such that $U = cl(U',N)$. \\
(2) For each arc $(U,W)$ of $M_i$, $(U',W')$ is an arc of $N$.\\
(3) For each vertex $U$ of $M_i$ that has at least one child in $M_i$, for every child $Y$ of $U'$ in $N$, there exists a vertex $W$ of $M_i$ such that $W$ is a child of $U$ in $M_i$ and $W' = Y$.

$H_0$ is trivially true since $X$ is the only vertex of $M_0$ and $X'$ is the root of $N$.

Claim 1.  Assume $H_j$ and the procedure has not terminated.   We show $H_{j+1}$.  

If 2a or 2b applies, then Claim 1 is immediate.  Hence we assume that 2c applies and there is a vertex $C$ of $M_j$ containing at least three points which has not been checked off.  Compute $ProperTr(C)$ and $MaxProperCh(C)$ as above.  By $H_j$, there exists vertex $C'$ of $N$ such that $C = cl(C',N)$.  It suffices to show that \\
(a) for each child $Y$ of $C'$ in $N$,  $D:=cl(Y,N)$ lies in $MaxProperCh(C)$; and \\
(b) each member of $MaxProperCh(C)$ consists of a cluster $D$ for which there exists a child $E$ of $C'$ in $N$ such that $D = cl(E,N)$.

We first prove (a):

Claim 1a.  Let $Y$ be a child of $C'$ in $N$.  Then $D = cl(Y,N)$ is a member of $MaxProperCh(C)$.

Since $C$ is a vertex in $M_j$, there exists by $H_j$ a directed path  $r = P_n,$ $P_{n-1}$, $P_{n-2}$, $\cdots, P_1$, $P_0=C'$  in $N$ from $r$ to $C'$ such that for $i = 0, \cdots, n$,  $cl(P_i,N)$ is a vertex of $M$ and for $i = 1, \cdots, n$,  $(cl(P_i,N), cl(P_{i-1},N))$ is an arc of $M$.  (This is because $C$ occurred in $M$ as a child of some vertex, which occurred in $M$ as a child of some other vertex, etc.) 

By Lemma 3.3, since $Y$ is a child of $C'$ in $N$, there exists a tree $T$ in $Tr(N)$ that contains the directed path  $r = Q_n, Q_{n-1}, \cdots, Q_0, Q_{-1}$ for which $cl(Q_i,T) = cl(P_i,N)$, $cl(Q_0,T) = cl(C',N) = C$ and $cl(Q_{-1},T) = cl(Y,N) = D$.  By $H_j$, $T \in ProperTr(C)$, so it follows that  $D = cl(Y,N) \in ProperCh(C)$.  

I claim that $D \in MaxProperCh(C)$.  Otherwise, there exists a tree $\hat{T}$ in $ProperTr(C)$ with vertex $\hat{C}$  such that  $C = cl(\hat{C},\hat{T})$, $\hat{C}$  has child $\hat{D}$ in $T$,   and  $D \subset cl(\hat{D},\hat{T}) \subset C = cl(\hat{C},\hat{T})$.  Let $r = \hat{P}_m, \hat{P}_{m-1}, ... , \hat{P}_0 = \hat{C}$ be the directed path from the root $r$ to $\hat{C}$ in $\hat{T}$.   By construction, for $i = 0, \cdots, m$, $cl(\hat{P}_i,\hat{T})$ is a member of $M$ and for $i = 1, \cdots, m$, each arc $(cl(\hat{P}_i,\hat{T}), cl(\hat{P}_{i-1},\hat{T}))$ is an arc in $M_j$.  By $H_j$, for each $i$, $cl(\hat{P}_i,\hat{T})$ is a cluster of $N$; i.e.,  there exists vertex $Q_i$ in $N$ such that $cl(Q_i,N) = cl(\hat{P}_i,\hat{T})$ and $(Q_i,Q_{i-1})$ is an arc of $N$.  In particular, by regularity of $N$, $Q_0 = C'$.  

Let $\hat{p}$ be the parent map that yields $\hat{T}$ (i.e., $T(N_{\hat{p}}) = \hat{T}$).  Note for $0\leq i\leq m$ that $\hat{P}_i$ is a vertex of both $\hat{T}$ and $N_{\hat{p}}$.  Then $cl(Q_i,N) = cl(\hat{P}_i, \hat{T}) = cl(\hat{P}_i,N_{\hat{p}})  \subseteq cl(\hat{P}_i,N)$.  Since $N$ is regular it follows that $\hat{P}_i \leq  Q_i$ in $N$.  Since the arcs of $N_{\hat{p}}$ form a subset of the arcs of $N$, it follows from $\hat{P}_{i+1} \leq  \hat{P}_i$ in $N_{\hat{p}}$ that $\hat{P}_{i+1} \leq  \hat{P}_i$ in $N$ as well for $i = 0, \cdots, m-1$.  

Since $cl(\hat{P}_m, N_p)  = X = cl(Q_m, N)$ it is clear that $\hat{P}_m = Q_m$.  

Now 
$cl(Q_{m-1},N) = cl(\hat{P}_{m-1}, N_{\hat{p}}) \subseteq cl(\hat{P}_{m-1},N) \subseteq cl(\hat{P}_m,N) = cl(Q_m,N)$ [since $\hat{P}_m \leq  \hat{P}_{m-1}$ in $N$].  By regularity of $N$ it follows that  
$Q_m \leq  \hat{P}_{m-1} \leq  Q_{m-1}$ in $N$.  The arc $(Q_m, Q_{m-1})$ of $N$ is not redundant, so it follows that either $\hat{P}_{m-1} = Q_m$ or $\hat{P}_{m-1} = Q_{m-1}$.  But $\hat{P}_{m-1} \neq \hat{P}_m = Q_m$, so we see that $\hat{P}_{m-1} = Q_{m-1}$.

Similarly 
$cl(Q_{m-2},N) = cl(\hat{P}_{m-2}, N_{\hat{p}}) \subseteq cl(\hat{P}_{m-2},N) \subseteq cl(\hat{P}_{m-1},N) = cl(Q_{m-1},N)$ [since $\hat{P}_{m-1} \leq  \hat{P}_{m-2}$ in $N$].  By regularity of $N$ it follows that  
$Q_{m-1} \leq  \hat{P}_{m-2} \leq  Q_{m-2}$ in $N$.  The arc $(Q_{m-1}, Q_{m-2})$ of $N$ is not redundant since $N$ is regular, so it follows that either $\hat{P}_{m-2} = Q_{m-1}$ or $\hat{P}_{m-2} = Q_{m-2}$.  But $\hat{P}_{m-2} \neq \hat{P}_{m-1} = Q_{m-1}$, so we see that $\hat{P}_{m-2} = Q_{m-2}$.

In like manner we see that $\hat{P}_i = Q_i$ for $i = m-3, m-4, \cdots, 0$.  

It follows that  $\hat{C} = \hat{P}_0 = Q_0 = C'$.  Since $Y$ is a child of $C'$ in $N$ we know
$cl(Y,N) = D \subset cl(\hat{D},\hat{T}) = cl(\hat{D},N_{\hat{p}}) \subseteq cl(\hat{D},N) \subseteq cl(\hat{P}_0,N) = cl(C',N)$.  It follows that $C' \leq  \hat{D} \leq  Y$ in $N$.  Since the arc $(C',Y)$ is nonredundant, either $\hat{D} = C'$ or $\hat{D} = Y$.  But  $\hat{D} \neq C'$ since $\hat{D}$ is a child of $\hat{C} = C'$.  It follows that $\hat{D} = Y$.  Hence  $D = cl(Y,N) \subset cl(\hat{D},\hat{T}) \subseteq cl(\hat{D},N) = cl(Y,N)$, which is impossible.   This contradiction proves that $D =cl(Y,N) \in MaxProperCh(C)$.

Next we prove (b):

Claim 1b.  Each member $D$ of $MaxProperCh(C)$ satisfies that there exists a child $E$ of $C'$ in $N$ such that $D = cl(E,N)$. 

Let $D$ be a member of $MaxProperCh(C)$.   Thus there exists a tree $\hat{T}$ in $ProperTr(C)$ with vertex $\hat{C}$ such that  $C = cl(\hat{C},\hat{T})$ and  $\hat{C}$  has child $\hat{D}$  in $\hat{T}$ such that $cl(\hat{D},\hat{T}) = D$.  Let $r = \hat{P}_m, \hat{P}_{m-1}, \cdots, \hat{P}_0 = \hat{C}$ be the directed path from $r$ to $\hat{C}$ in $\hat{T}$.   By construction, for $i = 0, \cdots, m$, $cl(\hat{P}i,\hat{T})$ is a vertex of $M_j$ and for $i = 1, \cdots, m$, each arc $(cl(\hat{P}_i,\hat{T}), cl(\hat{P}_{i-1},\hat{T}))$ is an arc in $M_j$.  By $H_j$, for $i$ such that $0 \leq  i \leq  m$,  there exists a vertex $Q_i$ of $N$ such that $cl(\hat{P}_i,\hat{T}) = cl(Q_i,N)$, and for $1 \leq  i \leq  m$, $(Q_i, Q_{i-1})$ is an arc of $N$.  In particular, by regularity of $N$, $Q_0 = C'$.  

As in the proof of Claim 1a, we see that $\hat{P}_i = Q_i$ for $i = m, m-1, \cdots, 0$ and  $\hat{C} = \hat{P}_0 = Q_0 = C'$.  Since $\hat{D}$ is a child of $\hat{C}$ in $N_{\hat{p}}$, it follows that $\hat{C} \leq  \hat{D}$ in $N_{\hat{p}}$, whence $\hat{C} \leq  \hat{D}$ in $N$.  Since $\hat{D} \neq \hat{C}$, there exists a child $E$ of $\hat{C}=C'$ in $N$ such that  $E \leq  \hat{D}$.    Hence $D = cl(\hat{C},\hat{T}) = cl(\hat{C}, N_{\hat{p}}) \subseteq cl(E,N)$.   By Claim 1a, $cl(E,N) \in MaxProperCh(C)$.  Hence $D$ is not in $MaxProperCh(C)$ unless $D = cl(E,N)$, proving Claim 1b.

This completes the proof of Claim 1. 

We now complete the proof of Theorem 3.1.  

We saw above that the procedure terminates, say with $M_n$.  By Claim 1,  $H_n$ will be true.  In fact, each vertex $W$ of $N$ has been represented in $M_n$ in the sense that $cl(W,N) \in V_n$.  To see this, note that there is a directed path $P_0=r, P_1, \cdots, P_k = W$ in $N$ since $r$ is the root of $N$.  Since $X' = r$, by Claim 1a it follows that $cl(P_1,N)$ is a member of $MaxProperCh(X)$, whence by construction $cl(P_1,N) \in V_n$, $cl(P_1,N)' = P_1$, and $(cl(r,N), cl(P_1,N))$ in $A_n$.  Since $cl(P_1,N)'=P_1$ and $P_2$ is a child of $P_1$ in $N$, by Claim 1a again it follows that $cl(P_2,N) \in V_n$, $cl(P_2,N)' = P_2$, and $(cl(P_1,N), cl(P_2,N)) \in A_n$.  Repeating the argument we ultimately obtain that $cl(P_k,N) = cl(W,N)$ in $V_n$.  

Since every arc in $N$ lies on some directed path in $N$ starting at $r$ hence occurs as some arc $(P_i,P_{i+1})$ using the notation above, the same argument shows that the arc corresponds to the arc $(cl(P_i,N), cl(P_{i+1},N)) \in A_n$. Thus every vertex and arc of $N$ has a corresponding vertex and arc in $M_n$.

There remains only to show that $M_n$ has no additional vertices or arcs.  By Claim 1b every vertex which is added at any stage has the form $cl(E,N)$ for some vertex $E$ of $N$.  Hence $M_n$ has no additional vertices.  By claim 1a, every arc in $M_n$  corresponds to an arc in $N$.

This completes the proof.  

\end{proof}

\section{ An example of the reconstruction }

Let $N$ be the network given in Figure 3.  The base-set is $X = \{1,2,3,4,5,6\}$.  The clusters satisfy $cl(A) = X$, $cl(B) = \{1,2,3,4,6\}$, $cl(C) = \{5,6\}$, $cl(D) = \{1,2,3,6\}$, $cl(E) = \{2,3\}$, $cl(F) = \{1,2,6\}$, $cl(G) = \{1,2,3\}$, and $cl(i) = \{i\}$ for $1\leq i\leq 6$.  An inspection shows that $N$ is regular.

\begin{figure}[!htb]  
\begin{center}

\begin{picture}(180,170) 

\put(100,160) {\vector(1,-1){30}}
\put(100,160) {\vector(-1,-1){60}}
\put(40,100) {\vector(-1,-1){30}}
\put(40,100) {\vector(0,-1){60}}
\put(130,130) {\vector(-1,-1){30}}
\put(130,130) {\vector(1,-1){30}}
\put(100,100) {\vector(-1,-1){30}}
\put(100,100) {\vector(1,-1){30}}
\put(70,70) {\vector(-1,-1){30}}
\put(70,70) {\vector(0,-1){60}}
\put(70,70) {\vector(1,-1){30}}
\put(130,70) {\vector(-1,-1){30}}
\put(130,70) {\vector(1,-1){30}}
\put(160,40) {\vector(0,-1){30}}
\put(160,40) {\vector(-3,-1){90}}
\put(110,160){$A$}
\put(140,130){$B$}
\put(30,100){$C$}
\put(110,100){$D$}
\put(170,40){$E$}
\put(80,70){$F$}
\put(140,70){$G$}
\put(100,30){1}
\put(70,0){2}
\put(160,0){3}
\put(170,100){4}
\put(10,60){5}
\put(40,30){6}

\end{picture}

\caption{  A regular network $N$ with $X = \{1,2,3,4,5,6\}$ which will be reconstructed from its trees.}
\end{center}
\end{figure}
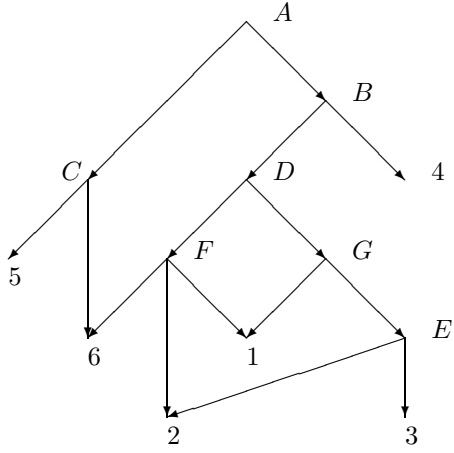

There are three hybrid vertices 1, 2, 6, each with indegree 2.  Hence there are 8 parent maps.  Here I will list the displayed trees by telling the parent map and the nontrivial clusters of each:  \\
$T_1$: $p(1) = G$, $p(2) = E$, $p(6) = F$. Clusters $\{2,3\}$, $\{1,2,3\}$, $\{1,2,3,6\}$, $\{1,2,3,4,6\}$.  \\
$T_2$: $p(1) = G$, $p(2) = E$, $p(6) = C$.  Clusters $\{2,3\}$, $\{1,2,3\}$, $\{1,2,3,4\}$, $\{5,6\}$.\\
$T_3$: $p(1) = G$, $p(2) = F$, $p(6) = F$.  Clusters $\{1,3\}$, $\{2,6\}$, $\{1,2,3,6\}$, $\{1,2,3,4,6\}$.\\
$T_4$: $p(1) = G$, $p(2) = F$, $p(6) = C$.  Clusters $\{1,3\}$, $\{1,2,3\}$, $\{1,2,3,4\}$, $\{5,6\}$.\\
$T_5$: $p(1) = F$, $p(2) = E$, $p(6) = F$.  Clusters $\{2,3\}$, $\{1,6\}$, $\{1,2,3,6\}$, $\{1,2,3,4,6\}$.\\
$T_6$: $p(1) = F$, $p(2) = E$, $p(6) = C$.  Clusters $\{2,3\}$, $\{1,2,3\}$, $\{1,2,3,4\}$, $\{5,6\}$.\\
$T_7$: $p(1) = F$, $p(2) = F$, $p(6) = F$.  Clusters $\{1,2,6\}$, $\{1,2,3,6\}$, $\{1,2,3,4,6\}$.\\
$T_8$: $p(1) = F$, $p(2) = F$, $p(6) = C$.  Clusters $\{1,2\}$, $\{1,2,3\}$, $\{1,2,3,4\}$, $\{5,6\}$.  

We now perform procedure MaximumProperChild.  Let $M_k = (V_k, A_k)$.  Initially $V_0 = \{X\}$.  
The proper children of $X$ are the children of $X$ in any proper tree.  All the trees are proper trees for $X$.  Hence 
$ProperCh(X) = \{\{1,2, 3,4,6\}, \{5\}, \{1,2,3,4\}, \{5,6\}\}$.  
The maximal proper children are the maximal members of $ProperCh(X)$.  Hence 
$MaxProperCh(X) = \{\{1,2,3,4,6\},$  $\{5,6\}\}$.  These are adjoined to $M_0$ as children of $X$.
Hence $M_1 = (V_1, A_1)$ has $V_1 = \{X, \{1,2,3,4,6\}, \{5,6\}\}$ and has arcs $(X , \{1,2,3,4,6\})$ and $(X, \{5,6\})$.   

Let $C =\{5,6\}$ in $V_1$.  By 2b, the children will be $\{5\}$ and $\{6\}$.  Hence $M_2$ has
$V_2 = \{X, \{1,2,3,4,6\}, \{5,6\}, \{5\}, \{6\}\}$ and the arcs are those of $M_1$ together with $(\{5,6\},\{5\})$ and $(\{5,6\}, \{6\})$.

Let $C = \{1,2,3,4,6\}$.  The proper trees must contain both $X$  and $\{1,2,3,4,6\}$.  Hence
$ProperTr(C) = \{T_1, T_3, T_5, T_7\}$.  The proper children of $C$ are the children of $C$ in one of the proper trees.  Hence 
$ProperCh(C) = \{\{1,2,3,6\}, \{4\}\}$.  In this case all proper children are maximal.  Hence $M_3$ has $V_3 = V_2$ $\cup \{\{1,2,3,6\}$, $\{4\}\}$ and suitable arcs are also added.

Let $C = \{1,2,3,6\}$.  A proper tree must contain $C$,  some parent  of $C$ hence $\{1,2,3,4,6\}$, and $X$.   Thus $ProperTr(C) = \{T_1,T_3,T_5,T_7\}$.  The proper children are the children of $C$ in any of these proper trees, so  $ProperCh(C) = \{\{1,2,3\}$, $\{6\}, \{1,3\}$, $\{2,6\}, \{1,6\}$, $\{2,3\}$, $\{1,2,6\}, \{3\}\}$.  Then $MaxProperCh(C)$ $ = \{\{1,2,3\}$, $\{1,2,6\}\}$.  These are adjoined, so $V_4 = V_3 \cup \{\{1,2,3\}, \{1,2,6\}\}$ and arcs are inserted so that these are the children in $M_4$ of $\{1,2,3,6\}$.  

Let $C = \{1,2,3\}$.  A proper tree must contain $\{1,2,3\}$, $\{1,2,3,6\}$, $\{1,2,3,4,6\}$, and $X$.  Hence $ProperTr(C) = \{T_1, T_6\}$.  Then $ProperCh(C) = \{\{1\}, \{2,3\}\} =  MaxProperCh(C)$.  Now $V_5 = V_4 \cup \{\{1\}, \{2,3\}\}$.  

Let $C = \{1,2,6\}$.  A proper tree must contain $\{1,2,6\}, \{1,2,3,6\}, \{1,2,3,4,6\}$, and $X$.  Hence $ProperTr(C) = \{T_7\}$  It follows that  
$ProperCh(C) = \{\{1\}$, $\{2\}, \{6\}\}$ $ = MaxProperCh(C)$.  Now $V_6 = V_5 \cup \{\{1\}, \{2\}, \{6\}\}$.   Note that $\{6\}$ was already in $V_5$, but it is at this stage that we obtain the arc $(\{1,2,6\}, \{6\})$.  

Let $C = \{2,3\}$.  By 2b the children will be $\{2\}$ and $\{3\}$.  Hence $V_7 = V_6 \cup \{\{2\}, \{3\}\}$.  

The procedure terminates now with $M_7$.  Note that $V_7$ now consists of exactly the sets $cl(U,N)$ where $U$ is a vertex of $N$.  Similarly the arcs of $M_7$ consist exactly of the arcs $(cl(U,N), cl(W,N))$ such that $(U,W)$ is an arc of $N$.  Thus $M_7$ is isomorphic with $N$; indeed, it is the cover digraph of $N$.  

It is natural to wish that the identification of the children could be simplified, for example by merely looking at the maximal children of $C$ in any input tree $T$ rather than insisting on proper trees for $C$.  This alternative approach, however, fails on this example.    If we did not insist on proper trees, then  $\{1\}$ is not a maximal child of $\{1,2,3\}$ since $T_8$ contains $\{1,2,3\}$ with the child $\{1,2\}$.  Our procedure works since $T_8$ is not a proper tree for $\{1,2,3\}$ because the parent of $\{1,2,3\}$ in $T_8$ is $\{1,2,3,4\}$ which had not been identified as a cluster in $N$.

\section{Discussion }

The main result in this paper is that, if $N= (V,A,r,X)$ is a regular network, then the procedure MaximumProperChild will reconstruct $N$ from the collection $Tr(N)$ of all trees displayed by $N$.  The definition of regularity has two parts: \\
(1) $cl_N: V \to \mathcal{P}(X)$  is one-to-one; and \\
(2) there is an arc $(u,v) \in A$ iff $cl_N(v) \subset cl_N(u)$ and there is no $w \in V$ such that
$cl_N(v) \subset cl_N(w) \subset cl_N(u)$.

The network $N$ in Figure 1 is not regular because (1) fails, and the method fails to reconstruct $N$.  Both $M$ and $N$ in Figure 1 have the same displayed trees.  The procedure  of course reconstructs $M$ since $M$ is regular.

Figure 4 shows two networks $A$ and $B$ that are not regular.  Even though (1) holds for both, (2) fails.  It can be seen that they display exactly the same trees. Indeed, using Newick notation, they both display $(4,(3,(1,2)))$, $(4, (1,(2,3)))$, and $(4, (1,2,3))$ but not $(4,(2,(1,3)))$.  Hence the conclusion of Cor 4.2 fails for these networks when just the second condition of regularity fails.  Curiously, one easily checks that there is no regular network $N$ that displays these three trees and no others.  
  
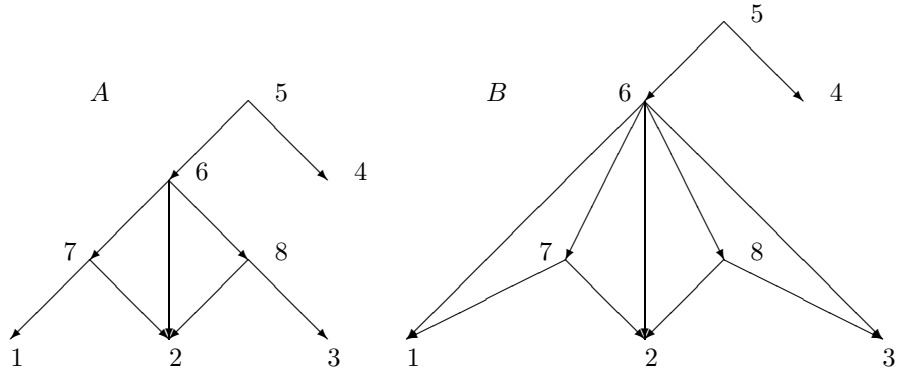
\begin{figure}[!htb]  
\begin{center}

\begin{picture}(350,140) 

\put(100,100) {\vector(1,-1){30}}
\put(100,100) {\vector(-1,-1){30}}
\put(70,70) {\vector(-1,-1){30}}
\put(70,70) {\vector(0,-1){60}}
\put(70,70) {\vector(1,-1){30}}
\put(40,40) {\vector(-1,-1){30}}
\put(40,40) {\vector(1,-1){30}}
\put(100,40) {\vector(-1,-1){30}}
\put(100,40) {\vector(1,-1){30}}
\put(10,0) {1}
\put(70,0){2}
\put(130,0){3}
\put(140,70){4}
\put(110,100){5}
\put(80,70){6}
\put(30,40){7}
\put(110,40){8}
\put(40,100){$A$}

\put(280,130) {\vector(-1,-1){30}}
\put(280,130) {\vector(1,-1){30}}
\put(250,100) {\vector(-1,-1){90}}
\put(250,100) {\vector(-1,-2){30}}
\put(250,100) {\vector(0,-1){90}}
\put(250,100) {\vector(1,-2){30}}
\put(250,100) {\vector(1,-1){90}}
\put(220,40) {\vector(-2,-1){60}}
\put(220,40) {\vector(1,-1){30}}
\put(280,40) {\vector(-1,-1){30}}
\put(280,40) {\vector(2,-1){60}}
\put(160,0){1}
\put(250,0){2}
\put(340,0){3}
\put(320,100){4}
\put(290,130){5}
\put(240,100){6}
\put(210,40){7}
\put(290,40){8}
\put(190,100){$B$}

\end{picture}

\caption{ $A$ and $B$ are non-regular networks with $X = \{1,2,3,4\}$ that display the same trees.}
\end{center}
\end{figure}

In the example of section 4, it is easy to see that $N$ would be reconstructed using procedure MaximumProperChild given the input $\mathcal{D} = \{T_1, T_2, T_7\}$.  Thus, reconstruction may be possible even if $\mathcal{D} \ne Tr(N)$.  On the other hand, in the same example if $\mathcal{D} = \{T_1, T_2, T_3\}$, then $N$ is not reconstructed. It would be interesting to characterize $\mathcal{D} \subset Tr(N)$ for which reconstruction of $N$ is possible using MaximumProperChild with input $\mathcal{D}$.


\begin{thebibliography}{99}

\bibitem{ban92}
H.-J. Bandelt and A. Dress,  1992.  Split decomposition: a new and useful approach to phylogenetic analysis of distance data.  Molecular Phylogenetics and Evolution 1, 242-252.

\bibitem{bar04}
M. Baroni, C.  Semple,  and M. Steel,  2004. A framework for representing reticulate evolution.  Annals of Combinatorics 8, 391-408.

\bibitem{bs06}
M. Baroni  and M.Steel, 2006.  Accumulation phylogenies.  Annals of Combinatorics 10, 19-30.

\bibitem{bor07}
M. Bordewich  and C. Semple,  2007. Computing the minimum number of hybridization events for a consistent evolutionary history.  Discrete Applied Mathematics 155, 914-928.  

\bibitem{crv07}
G. Cardona, F.  Rossall\'o, and G. Valiente, 2007.  Comparison of tree-child phylogenetic networks.  
To appear in IEEE/ACM Transactions on Computational Biology and Bioinformatics

\bibitem{car08}
G. Cardona, L.  Merc\`e,  F. Rossall\'o, and G. Valiente, 2008.  
A distance metric for a class of tree-sibling phylogenetic networks, 2008. Bioinformatics 24(13), 1481-1488.  


\bibitem{deg06}
J. Degnan and N. Rosenberg, 2006.  Discordance of species trees with their most likely gene trees. PLoS Genetics 2(5)e68, 762-768.  

\bibitem{gel04a} 
D. Gusfield, S. Eddhu, and C. Langley, 2004.  Optimal, efficient reconstruction of phylogenetic networks with constrained recombination.  Journal of Bioinformatics and Computational Biology 2, 173-213.  

\bibitem{hol04}
B. Holland,  K. Huber, V. Moulton, and P. Lockhart, 2004.  Using consensus networks to visualize contradictory evidence for species phylogeny.  Molecular Biology and Evolution 21(7), 1459-1461. 

\bibitem{hus05}
D. Huson,  T. Kl\"opper, P.Lockhart, and M. Steel, 2005.  Reconstruction of reticulate networks from gene trees. In S. Miyano et al. (eds.) RECOMB 2005, LNBI 3500, 233-249.  

\bibitem{hus07}
D. Huson and T. Kl\"opper, 2007.   
Beyond galled trees --decomposition and computation of galled networks.  In T. Speed and H. Huang (eds.): RECOMB 2007, LNBI 4453, 211-225.  


\bibitem{mor04}
B. Moret, L. Nakhleh, T. Warnow, C. R. Linder, A. Tholse, A. Padolina, J. Sun, and R. Timme, 2004.   
Phylogenetic networks: modeling, reconstructibility, and accuracy.  IEEE Transactions on Computational Biology and Bioinformatics 1, 13-23.  

\bibitem{nak05}
L. Nakhleh, T. Warnow, C.R. Linder, and K.S. John, 2005.  Reconstructing reticulate evolution in species: Theory and practice.  J. Comput. Biolog 12, 796-811.  

\bibitem{rok03}
A. Rokas, B. Williams, N. King, and S. Carroll, 2003.  Genome-scale approaches to resolving incongruence in molecular phylogenies.  Nature 425, 798-804. 

\bibitem{ros02}
N. A. Rosenberg, 2002.  The probability of topological concordance of gene trees and species trees.  Theoretical Population Biology 61, 225-247.  

\bibitem{ros07}
N. A. Rosenberg, 2007.  Counting coalescent histories.  Journal of Computational Biology 14(3), 360-377.  

\bibitem{str00}
K. Strimmer and V. Moulton, 2000.  Likelihood analysis of phylogenetic networks using directed graph models.  Molecular Biology and Evolution 17, 875-881.   


\bibitem{wan01}
L. Wang, K. Zhang, and L. Zhang,  2001.  Perfect phylogenetic networks with recombination.   Journal of Computational Biology 8,  69-78.


\bibitem{wil08}
S.J. Willson,  2008.  Reconstruction of certain phylogenetic networks from the genomes at their leaves.  Journal of Theoretical Biology 252, 338-349.


\end{thebibliography}
\end{document}